\documentclass[journal]{IEEEtran}
\usepackage[utf8]{inputenc}
\usepackage{graphicx}
\usepackage{amsmath}
\usepackage{amsfonts}
\usepackage{commath}
\usepackage{xcolor}
\usepackage{cite}
\usepackage{comment}
\usepackage{rotating}
\usepackage{multirow}
\usepackage{amssymb}
\usepackage{eqparbox}
\usepackage{algorithm}
\usepackage{algorithmic}
\usepackage{balance} 
\usepackage{enumerate}
\usepackage{tikz}
\pagenumbering{arabic}
\usetikzlibrary{shapes,arrows,positioning,calc}
\tikzset{
block/.style = {draw, fill=white, rectangle, minimum height=2.5em, minimum width=3em},
tmp/.style = {coordinate},
sum/.style= {draw, fill=white, circle, node distance=1cm},
input/.style = {coordinate},
output/.style= {coordinate},
pinstyle/.style = {pin edge={to-,thin,black}}}
\usepackage{float}
\usepackage{amsthm}


\setlength{\abovecaptionskip}{4pt plus 3pt minus 2pt} 

\newtheorem{theorem}{Theorem}
\newtheorem{lemma}{Lemma}
\newtheorem{proposition}{Proposition}




\title{\huge A Reference Governor for Overshoot Mitigation of Tracking Control Systems}
\author{\IEEEauthorblockN{C. Freiheit
, D. M. Anand
, H. R. Ossareh}

\thanks{C. Freiheit and H. R. Ossareh are with the University of Vermont, Burlington,
VT, 05405 USA, e-mail: \{collin.freiheit, hamid.ossareh\}@uvm.edu}
\thanks{D. M. Anand is with the National Institute of Standards and Technology, USA, e-mail: dhananjay.anand@nist.gov}
}
\date{August 2019}
\begin{document}
\maketitle

\begin{abstract}
This paper presents a novel reference governor scheme for overshoot mitigation in tracking control systems. Our proposed scheme, referred to as the Reference Governor with Dynamic Constraint (RG-DC), recasts the overshoot mitigation problem as a constraint management problem. The outcome of this reformulation is a dynamic Maximal Admissible Set (MAS), which varies in real-time as a function of the reference signal and the tracking output. The RG-DC employs the dynamic MAS to modify the reference signal to mitigate or, if possible, prevent overshoot. 
We present several properties of the dynamic MAS and the algorithms required to compute it. We also investigate the stability and recursive feasibility of the RG-DC, and present an interesting property of RG-DC regarding its effect on the governed system's frequency response. Simulation results demonstrate the efficacy of the approach, and also highlight its limitations.  This paper serves as an extension of our earlier paper on this topic.
\end{abstract}
\begin{IEEEkeywords}
Overshoot mitigation, Reference governor, Constraint management, Maximal admissible set, Linear systems
\end{IEEEkeywords}
\section{Introduction}
\label{motivation}
Overshoot in closed-loop control systems is often an undesired phenomenon. For example, position overshoot in servo controlled robots may result in collisions, and in regulated electronic power converters, overshoot may cause  overload currents. Surprisingly,  there are very few methods available in the literature of control systems dedicated to overshoot mitigation.  One obvious solution is feedforward plant inversion \cite{Devasia_2002,Zou_Devasia_2007}, wherein a pre-filter is used to eliminate the overshoot resulting from the underdamped and/or zero dynamics of the closed-loop system. However this strategy requires an exact model of the plant, which is not always available. Additionally, a stable non-minimum phase system poses the problem of system destabilization upon plant inversion. Another strategy is to use a detuned or a more complex controller within the loop; however, this approach has the downside of slowing down the system, increasing its complexity, or not being able to handle variability in the plant dynamics. Furthermore, this approach may not be applicable to off-the-shelf products or systems with legacy controllers. Other overshoot mitigation solutions in the literature include a cascade control scheme coupled with a sliding mode controller \cite{Sliding_Mode_2007}, and a feedback gain design method based on quantifier elimination \cite{quantifier_elimination_2015}. These solutions either require an accurate model of the plant or  increase the complexity of the inner loop. In this paper, we propose a novel overshoot mitigation strategy using the Reference Governor (RG) framework. Unlike the existing methods in the literature, the proposed strategy does not require modifications to the controller within the closed-loop system, does not require model inversion, and can be made robust to modeling errors.

To provide a brief background, RG \cite{Kolmanovsky_2014,Gilbert_Kolmanovsky_1995,Garone_2017,Gilbert_1999,Ossareh_2019,Osorio_2018,laracy2020constraint, osorio2019reference,
liu2018decoupled}  is a predictive control strategy that, similar to Model Predictive Control (MPC) \cite{Maciejowski_2008}, employs a prediction of the evolution of the system state to enforce pre-specified constraints on the inputs, states, or the outputs. Unlike MPC, however, RG modifies the reference to a pre-stabilized closed-loop control system and is primarily intended for constraint management. Moreover, RG is more numerically efficient than MPC, which makes it attractive for real-time control of fast processes. A block diagram of a closed-loop system controlled by a RG is depicted in Fig.~\ref{fig:RGblock}. RG employs the so-called Maximal Admissible Set (MAS) \cite{Gilbert_1991}, which characterizes the set of all initial conditions and inputs that satisfy the constraints for all time. The MAS is computed offline, allowing the RG to enforce the constraints in real-time by solving a linear program subject to  state and input values belonging to the MAS.


\begin{figure}
\centering
\begin{tikzpicture}[auto, node distance=1.5cm,>=latex']
\node [input, name=rinput] (rinput) {};
\node [block, right of=rinput,text width=1.5cm,align=center] (controller) {{\footnotesize Reference Governor}};
\node [block, right of=controller,node distance=3cm,text width=1.6cm,align=center] (system)
{{\footnotesize Closed-Loop System}};
\node [output, right of=system, node distance=2cm] (output) {};
\node [tmp, below of=controller,node distance=0.9cm] (tmp1){$s$};
\draw [->] (rinput) -- node{\hspace{-0.4cm}$r(t)$} (controller);
\draw [->] (controller) -- node [name=v]{$v(t)$}(system);
\draw [->] (system) -- node [name=y] {$y(t)$}(output);
\draw [->] (system) |- (tmp1)-| node[pos=0.75] {$x(t)$} (controller);
\end{tikzpicture}
\caption{Reference governor block diagram.} \label{fig:RGblock}
 \label{fig: Governor scheme}
\end{figure}
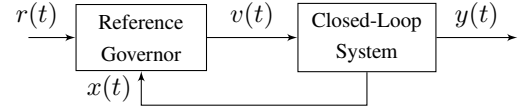

Traditional RG theory can only handle static constraints (i.e., constraints that do not vary with time). In this paper, we cast overshoot as a dynamic (i.e., time-varying) constraint on the tracking output (denoted by $y_{tr}$ hereafter) of the system.  Specifically,  if $y_{tr}(t)$ is above $r(t)$, we wish to maintain $y_{tr}$ above $r(t)$ for all future time. Similarly, if $y_{tr}(t)$ is below $r(t)$, we wish to hold $y_{tr}$ below $r(t)$ for all future time.
To accomplish this in the framework of the reference governor, we define the constraint set $\{y: y \leq r(t)\}$ whenever $y_{tr}(t)\leq r(t)$, and by the set $\{y:y \geq r(t)\}$ whenever $y_{tr}(t) > r(t)$. This dynamically-varying constraint leads to a novel, dynamically-varying MAS. We  present a unique modification of the RG theory to allow it to handle such dynamic MAS. We call this RG solution the Reference Governor with Dynamic Constraint (RG-DC).

The dynamic nature of the MAS and our RG-DC formulation raise the following questions:

\begin{enumerate}
    \item Does the number of inequalities required to describe the dynamic MAS change as the reference varies in real-time? 
    \item What is the geometric and algebraic relationship between the instances of the dynamic MAS at different times?
    \item Does the RG-DC guarantee constraint satisfaction  for all time?
    \item Can the RG-DC destabilize the control loop?
    \item 
    How much additional computational complexity does the RG-DC introduce compared to the standard RG?
    \item How can the RG-DC be made robust to model uncertainty and unknown disturbances?    \label{list}
\end{enumerate}
All of these questions will be addressed in this paper. To summarize, the original contributions of this paper are a new approach (RG-DC) to mitigate overshoot in closed-loop control systems, and the analysis and demonstration of the six questions raised above.  Additionally, we reveal an interesting property of RG-DC regarding its effect on the governed system's frequency response. Specifically,  the RG-DC can act as a novel nonlinear filter to eliminate resonance in closed-loop systems caused by underdamped poles and/or zero dynamics.

Note that reference \cite{Kalabic_Kolmanovsky_2014} investigates a RG solution for systems with  slowly-varying constraints.  However, the results of \cite{Kalabic_Kolmanovsky_2014} are not applicable to our problem because the dynamic constraint considered in our paper may vary rapidly. Furthermore, similar to our paper, reference \cite{TURBO} briefly considers overshoot mitigation in the framework of RG, but it does not provide a rigorous answer to the questions raised above. 

This paper is an extended version of our earlier IEEE L-CSS publication \cite{Freiheit_2020}. Furthermore, this paper corrects a small error in the L-CSS publication which is explained in the footnote of page $3$.

\section{Review of Reference Governors}
\label{review}
Consider Fig. \ref{fig:RGblock}, in which the ``closed-loop system" is described by the single-input multi-output  discrete-time, stable linear system: 
\begin{equation*}
\begin{aligned}
    x(t+1) = Ax(t)+Bv(t)\\
    y(t) = Cx(t)+Dv(t)
\end{aligned}
\end{equation*}
where the output $y$ is subject to the following polyhedral constraints:
\begin{equation}\label{eqn:Sys}
    y(t) \in \mathbb{Y} \triangleq \{y: S y\leq s\}
\end{equation}
Vector inequalities here and throughout the paper are to be interpreted element-wise. In general, the set in \eqref{eqn:Sys} may be unbounded. The RG employs the so-called maximal admissible set (MAS), denoted by $O_\infty$, which is the set of all states and control inputs that satisfy  \eqref{eqn:Sys} for all time:
\begin{equation}\label{eq:Oinfdefintiion}
O_\infty = \big\{(x,v): x(0)=x,\ v(t)=v,\ y(t) \in \mathbb{Y},\ \forall t\in \mathbb{Z}^+\big\}
\end{equation}
 As seen in \eqref{eq:Oinfdefintiion}, to construct MAS,  $v(t) = v$ is held constant for all $t$. Using this assumption, the evolution of the output $y(t)$ can be expressed explicitly as a function of $x(0)=x$ and $v$:
\begin{equation}
    y(t) = CA^t x + \left(C(I-A^t)(I-A)^{-1}B+D\right)v
    \label{eqn:y[t]}
\end{equation}

\noindent Therefore, MAS in \eqref{eq:Oinfdefintiion} can be characterized by a polyhedron defined by an infinite number of inequalities:
{\small
\begin{equation} \label{eqn:Sy[t]}
\begin{aligned}
    &O_\infty = \Big\{(x,v):\\
    &SCA^t x + 
    S\left(C(I-A^t)(I-A)^{-1}B+D\right)v \leq s,\ \forall t\in \mathbb{Z}^+\Big\} 
    \end{aligned}
\end{equation}}\par
It is shown in \cite{Gilbert_1991}  that,  under mild assumptions on $C$ and $A$, it is possible to make this set finitely determined (i.e., be described by a finite number of inequalities) by constraining the steady-state value of $y$, denoted by $y(\infty)$, to the interior of the constraint set:
\begin{equation} \label{eqn:Sy[infty]}
y(\infty) \triangleq  \left(C(I-A)^{-1}B+D\right)v \in (1-\epsilon) \mathbb{Y}
\end{equation}

where $\epsilon \in \mathbb{R}^+$ is a small number. As shown in \cite{Gilbert_1991}, after introducing  \eqref{eqn:Sy[infty]} in the MAS, there exists a finite prediction time $j^*$, where the inequalities corresponding to all future prediction times ($t>j^*$) are redundant. The smallest such $j^*$ is referred to as the {\it admissibility index} of the MAS. 

Combining \eqref{eqn:Sy[t]} and \eqref{eqn:Sy[infty]}, we obtain an inner approximation of $O_\infty$, denoted by $\widetilde{O}_{\infty}$, which can be represented by:
\begin{equation}
    \widetilde{O}_{\infty} = \Big\{(x,v):H_x x + H_v v \leq h\Big\}
    \label{eqn:O_infty}
\end{equation}
where the matrices $H_x$, $H_v$, and $h$ are finite dimensional.  Note that $h$ is a vector with all elements equal to $s$, except the first block of rows, which is $(1-\epsilon)s$. To numerically construct $H_x,\ H_v$ and $h$, we begin with the steady-state inequality in \eqref{eqn:Sy[infty]} and iteratively add the inequalities in \eqref{eqn:Sy[t]} starting with $t=0$. After each $t$, we check if the newly added rows are all redundant. If this is so, $j^*$ has been reached and the construction of $\widetilde{O}_\infty$ is complete. 
 
We now review the algorithm provided in  \cite{Gilbert_1991} to check for redundancy. This algorithm is leveraged in Section~\ref{DCM} for the analysis of our dynamic MAS. Given any polyhedron defined by $Mz\leq N$ and a scalar inequality given by $c^T z\leq d$, to determine if the inequality is redundant with respect to the polyhedron, it is common practice to solve the following linear program (LP) \cite{Gilbert_1991}:
\begin{equation}
    f=\max\ c^T z \quad \mathrm{subject\ to} \quad  Mz\leq N
    \label{eqn:RG_lp}
\end{equation}
If $f \leq d$, the new inequality is redundant. To apply this idea to  MAS, suppose MAS has been partially constructed with the inequalities in
\eqref{eqn:Sy[infty]} and  \eqref{eqn:Sy[t]} from $t=0$ up to $t=j$, for some $j$. Let $H_x$, $H_v$, $h$ represent the matrices of this partially constructed MAS. We wish to test whether an  inequality in \eqref{eqn:Sy[t]} with $t=j+1$ is redundant. The LP above can be used for this purpose, with $M=[H_x, H_v]$, $N=h$, $z=(x,v)$, and $c^T$ and $d$ representing the  inequality being tested for redundancy.

The final step in the RG is to select an optimal control input that will not cause a constraint violation. 
The RG update law that achieves this is:
\begin{equation}
    v(t) = v(t-1)+\kappa \left(r(t)-v(t-1)\right)
    \label{eqn:v[t]}
\end{equation}
where $\kappa \in [0,1]$. To select $\kappa$, we solve the following linear program: 

\begin{equation}\label{eq:kappa_for_RG}
\begin{aligned}
&\underset{\kappa\in [0,1]}{\text{maximize}}
& & \mathrm{\kappa} \\
& \hspace{10pt} \text{s.t.}
& & v(t)=v(t-1)+\kappa\left(r(t)-v(t-1)\right)\\
&&&\left(x(t),\ v(t)\right) \in \widetilde{O}_{\infty}
\end{aligned}
\end{equation}
where $x(t)$, $r(t)$, and $v(t-1)$ are known parameters at time $t$. If $\kappa=0$, the control command from the previous timestep is maintained to avoid constraint violation, and if $\kappa = 1$, the reference $r(t)$ is feasible and, therefore, $v(t) = r(t)$.

\section{Reference Governor with Dynamic Constraint (RG-DC)}
\label{DCM}  
Consider the asymptotically stable system
\begin{equation}
    \begin{aligned}
    x(t+1) &= Ax(t)+Bv(t)\\
    y_{tr}(t) &= C_{tr}x(t)+D_{tr}v(t)\\
    y_{st}(t) &= C_{st}x(t)+D_{st}v(t)
    \end{aligned}
    \label{eqn:RGDC_system}
\end{equation}
with DC gain from $v$ to $y_{tr}$ equal to $1$, where $y_{tr}\in\mathbb{R}$ is the tracking output on which we wish to enforce the dynamic overshoot constraint (as explained below). Additionally, $y_{st}\in \mathbb{R}^p$ refers to constrained outputs, on which we wish to enforce standard static constraints:
\begin{equation}
    \begin{aligned}
    y_{st}(t)\in \mathbb{Y}_{st}\triangleq\{y:S_{st}y\leq s_{st}\}
    \end{aligned}
    \label{eqn:static_set}
\end{equation}
It should be noted that, because $y_{tr}$ is the output of the plant within the closed-loop system, there is no feedforward from $v$ to $y_{tr}$ in practice. Thus, for the remainder of the paper, we assume that $D_{tr}=0$. Note however that $D_{st}$ is allowed to be non-zero because static constraints could be imposed on controller states or the controller output, which may require feedthrough.

For overshoot mitigation, we impose that $y_{tr}$ be constrained by  the reference $r$, which may vary with time. To do so, two cases must be considered: the first case is where $y_{tr}(t) \leq r(t)$ at the current time $t$, for which we define overshoot by the following condition:
$\exists k > t$ such that $y_{tr}(k) > r(t)$. Thus, to prevent overshoot, we must enforce the following constraint: $y_{tr}(k) \in \{y:y\leq r(t)\}$ for all $k > t$. In the second case, $y_{tr}(t) > r(t)$ at the current time $t$, for which we define overshoot by 
$\exists k > t$ such that $y_{tr}(k) < r(t)$ and the constraint by $y_{tr}(k) \in \{y:y\geq r(t)\}$ for all $k > t$. Note that we have chosen the constraint sets to be closed (i.e., the inequalities are not strict), which is necessary to ensure that the linear programs that arise in RG-DC are well-posed. The above leads to a time-varying constraint set that depends on both $y_{tr}(t)$ and $r(t)$:
\begin{equation}
    \begin{aligned}
     \mathbb{Y}_{tr}(r(t),\  y_{tr}(t))\triangleq  \begin{cases} 
      \{y:y\leq r(t)\} & y_{tr}(t) \leq r(t) \\
      \{y:y\geq r(t)\} & y_{tr}(t) > r(t) \\
   \end{cases}
    \end{aligned}
    \label{eqn:dynamic_set}
\end{equation}
The goal is to enforce $y_{tr}(k) \in \mathbb{Y}_{tr}(r(t),\  y_{tr}(t))$ for all $k > t$.

We now define the maximal admissible sets for this system. For the static constraint in \eqref{eqn:static_set}, we create MAS as discussed previously in Section \ref{review}. We denote this MAS by $O_{\infty,st}$. For the dynamic MAS, note that the second constraint in \eqref{eqn:dynamic_set} can be re-written as $\{y: -y \leq -r(t)\}$, which implies that both constraints in \eqref{eqn:dynamic_set} can be cast in the form \eqref{eqn:Sys}, where $S$ takes on the values of $1$ or $-1$ and $s$ takes on the values of $r(t)$ or $-r(t)$. Therefore, the definition of MAS remains the same as \eqref{eq:Oinfdefintiion}, with the exception that, since $\mathbb{Y}_{tr}$ depends on $r(t)$ and $y_{tr}(t)$, so does the MAS. We thus denote this dynamic MAS by $O_{\infty,tr}\left(r(t),y_{tr}(t)\right)$. In Subsection \ref{MAS}, we analyze the properties and computation of this dynamic MAS.

The proposed reference governor scheme (RG-DC) employs the intersection of the static MAS (for constraints on $y_{st}$) and the dynamic MAS (for constraints on $y_{tr}$) to compute $\kappa$ from \eqref{eq:kappa_for_RG} and $v(t)$ from  \eqref{eqn:v[t]}. We will discuss  the stability and recursive feasibility of the system with RG-DC, as well as the implementation aspects, in Subsection \ref{RG}. We also discuss a robust formulation of RG-DC to handle plant-model mismatch and unknown disturbances.

For simplicity, we assume that all states of the system are available for feedback. If not, a set-based observer can be designed as is done in \cite{kalabic2015reference}.

\subsection{Computational aspects and properties of the dynamic MAS}
\label{MAS}

We first address the computation of the dynamic MAS defined above (the computation of the static MAS is standard and will not be addressed). For this investigation, we seek to develop a polyhedral characterization of the dynamic MAS, parameterized on ${r(t)=r}$ and $y_{tr}(t)$.

First suppose  that $r>0$ denoted $r^+$. We will relax this assumption later. Now consider the inequalities in \eqref{eqn:Sy[t]}. Recall from above that $S$ takes on the value of $1$ (in which case $s=r^+$), or $-1$ (in which case $s=-r^+$). If $S=1$, the steady-state halfspace should be shrunk to: $v\leq (1-\epsilon)r^+$ and the inequalities in \eqref{eqn:Sy[t]} become:
\begin{equation}
C_{tr}A^t x +    C_{tr}(I-A^t)(I-A)^{-1}B v \leq r^+
\label{eqn:less}
\end{equation}
If $S=-1$, the steady-state halfspace should be shrunk to: $v\geq (1+\epsilon)r^+$ and the inequalities in \eqref{eqn:Sy[t]} become:
\begin{equation}
C_{tr}A^t x +    C_{tr}(I-A^t)(I-A)^{-1}B v \geq r^+
\label{eqn:greater}
\end{equation}
A  polyhedral representation of MAS constructed from the tightened steady-state constraint  $v \leq (1-\epsilon) r^+$ and the inequalities in \eqref{eqn:less} for all $t\geq 0$ is given by:
\begin{equation}
  O_{\infty}^-(r^+)=\Big\{(x,v):H_x x + H_v v \leq r^+ h^- \Big\} 
  \label{eqn:oinf-}
\end{equation}
Similarly, a representation of MAS  using \eqref{eqn:greater} with the tightened steady-state constraint  $v \geq (1+\epsilon) r^+$ is:
\begin{equation}
 O_{\infty}^+(r^+)=\Big\{(x,v):H_x x + H_v v \geq r^+ h^+ \Big\}
 \label{eqn:oinf+}
\end{equation}
where $h^{-}$ and $h^{+}$ are vectors of all $1$s except the first block of rows, which are $1-\epsilon$ and $1+\epsilon$, respectively\footnote{As mentioned in the Introduction, this paper corrects a small error in the L-CSS publication \cite{Freiheit_2020} involving the steady-state halfspaces of the dynamic MASs. In the L-CSS publication, the steady state halfspace constraint for all dynamic MASs was $(1-\epsilon)r$. This is problematic because $O^+_\infty(r)$ and $O^-_\infty(r)$ from Cases $2$ and $3$ from Table \ref{tab:4_combinations} of \cite{Freiheit_2020} would not necessarily be finitely determined because the steady-state halfspaces are not being shrunk. In this paper we fix this problem by letting the steady-state halfspaces of $O^+_\infty(r)$ and $O^-_\infty(r)$ from Cases $2$ and $3$ from Table \ref{tab:4_combinations} of \cite{Freiheit_2020} have constraints of $(1+\epsilon)r$, making the sets finitely determined. Note that all of the results (Lemmas, Propositions, Theorem, and example) of the L-CSS publication still hold, we simply modify the notation in this paper to distinguish between the cases of $(1-\epsilon)r$ and $(1+\epsilon)r$ for the steady-state halfspaces of the dynamic MAS.}. Note that in order to explicitly show the dependence of the sets on $r^+$, we have formulated \eqref{eqn:oinf-}-\eqref{eqn:oinf+} with $r^+h^-$ and $r^+h^+$ on the right hand sides (instead of simply $h$ as in \eqref{eqn:O_infty}). For now, we consider $H_x, H_v, h^-,h^+$ as being infinite dimensional matrices (i.e., a redundancy check was not performed when forming $O_{\infty}^-$ and $O_{\infty}^+$). Since we know, from Section \ref{review}, that both \eqref{eqn:oinf-} and \eqref{eqn:oinf+} must be finitely determined for a fixed $r^+$, our goal now is to study the admissibility index of these sets as functions of $r^+$. 

Recall from Section \ref{review} that to find the admissibility index of a MAS, we construct it row by row and stop when redundancy is detected. Furthermore, to detect redundancy, we use the linear program (LP) in \eqref{eqn:RG_lp}. While redundancy can be checked for  $O_{\infty}^-(r^+)$ using the same approach, $O_{\infty}^+(r^+)$ requires a LP of a different form. To formulate a LP for $O_{\infty}^+(r^+)$, we represent \eqref{eqn:oinf+} in the form of \eqref{eqn:oinf-}, yielding $-(H_x x + H_v v) \leq -r^+ h^+$. Upon applying \eqref{eqn:RG_lp} to this inequality and simplifying the resulting LP, we obtain the following adaptation of \eqref{eqn:RG_lp}:

\begin{equation}
    f=\min\ c^T z \quad \mathrm{subject\ to} \quad  Mz\geq N
    \label{eqn:RG_lp-}
\end{equation}

 To proceed with our analysis of admissibility index, we first show, with support of Lemma \ref{gamma}, that the individual admissibility indices of $O_{\infty}^-(r^+)$ and $O_{\infty}^+(r^+)$ are unchanged for any $r^+$. 
\begin{lemma}
\label{gamma}
Suppose the unique maximizer of
\begin{equation}
\begin{aligned}
    \mathrm{max}\ c^{T}z \quad 
\mathrm{subject\ to} \quad Mz\leq N
\end{aligned}
\label{eqn:lp_unscaled}
\end{equation}

\noindent is given by $z^*$. Then, for any $\gamma \in \mathbb{R^+}$, the maximizer of
\begin{equation}
\begin{aligned}
    \mathrm{max}\ c^{T}z
    \quad 
\mathrm{subject\ to} \quad
    Mz\leq\gamma N
\end{aligned}
\label{eqn:lp}
\end{equation}

\noindent is given by $\gamma z^*$. Furthermore, the optimal values of the objective functions in  \eqref{eqn:lp_unscaled} and \eqref{eqn:lp} are given by $c^T z^*$ and $c^T\gamma z^*$. That is, the optimal value of \eqref{eqn:lp} is $\gamma$ times larger than that of \eqref{eqn:lp_unscaled}.
\end{lemma}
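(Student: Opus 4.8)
The plan is to exhibit an explicit bijection between the feasible sets of the two linear programs and show that it scales the objective by a constant factor, so that optimality is transported directly. Define the map $\phi:z\mapsto \gamma z$. Since $\gamma\in\mathbb{R}^+$, multiplying the inequality $Mz\leq N$ by $\gamma$ preserves its direction, so $Mz\leq N \iff M(\gamma z)\leq \gamma N$; conversely, if $Mw\leq\gamma N$ then $M(w/\gamma)\leq N$. Hence $\phi$ is a bijection from $\{z:Mz\leq N\}$ onto $\{w:Mw\leq \gamma N\}$ with inverse $w\mapsto w/\gamma$. This is the only real content of the argument, and it is essentially immediate; there is no serious obstacle, so the ``hard part'' is merely to state the bookkeeping cleanly.

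Next I would use the fact that $\phi$ scales the cost linearly: $c^T\phi(z)=c^T(\gamma z)=\gamma\, c^T z$. First I would note that, because $\gamma>0$, the problem $\max\{c^T z : Mz\leq N\}$ has the same set of maximizers as $\max\{\gamma\,c^T z : Mz\leq N\}$, namely $\{z^*\}$ by hypothesis. Then, applying the bijection $\phi$, the problem $\max\{c^T w : Mw\leq\gamma N\}$ is obtained from $\max\{c^T\phi(z) : Mz\leq N\}=\max\{\gamma\,c^Tz : Mz\leq N\}$ by the change of variables $w=\phi(z)$. Since $\phi$ is a bijection between the two feasible sets, the (unique) maximizer of the scaled problem is $w^*=\phi(z^*)=\gamma z^*$, and uniqueness is preserved because $\phi$ is one-to-one.

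Finally I would read off the optimal values. The optimal value of \eqref{eqn:lp_unscaled} is $c^T z^*$ by definition, and substituting the maximizer $\gamma z^*$ into the objective of \eqref{eqn:lp} gives optimal value $c^T(\gamma z^*)=\gamma\, c^T z^*$, i.e. $\gamma$ times that of \eqref{eqn:lp_unscaled}, which completes the proof. The only point worth flagging explicitly is where positivity of $\gamma$ is used — both in preserving the inequality direction under scaling and in preserving the direction of optimization when the objective is multiplied by $\gamma$ — so I would make sure the write-up invokes $\gamma\in\mathbb{R}^+$ at exactly those two places.
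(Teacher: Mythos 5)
Your argument is correct and is essentially the paper's own proof: both rest on the change of variables $z\mapsto\gamma z$ identifying the two feasible sets and on the fact that multiplying the objective by the positive scalar $\gamma$ does not change the maximizer. Your write-up merely makes the bijection and its inverse explicit, which is a fine (slightly more detailed) presentation of the same argument.
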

\begin{proof}
Given \eqref{eqn:lp_unscaled}, we rewrite the constraint by multiplying both sides by $\gamma$: $M (\gamma z) \leq \gamma N$. Furthermore, we can multiply the cost function by $\gamma$, which is permitted  because a positive scaling on the objective function of a linear programming problem does not change the optimizer. We thus obtain the equivalent linear program:
\begin{equation*}
\begin{aligned}
    \mathrm{max}\ c^{T} (\gamma z) \quad 
\mathrm{subject\ to} \quad M (\gamma z)\leq \gamma N
\end{aligned}
\end{equation*}
which has the same optimizer as \eqref{eqn:lp_unscaled} but a different objective function value. Finally, we can perform a change of variable $\gamma z \rightarrow z$ to transform this optimization into \eqref{eqn:lp}. It can be concluded that, if the optimizer of \eqref{eqn:lp_unscaled} is $z^*$, the optimizer of \eqref{eqn:lp} must be $\gamma z^*$. 

\end{proof}

Noting that \eqref{eqn:RG_lp} and \eqref{eqn:lp_unscaled} are the same linear program, we can now apply the results of Lemma \ref{gamma} to the LP in \eqref{eqn:RG_lp} to show that admissibility index of $O_{\infty}^-(r^+)$ in \eqref{eqn:oinf-} is unaffected by a positive scaling on $r^+$ (the same argument holds true for $O_{\infty}^+(r^+)$ as well). Specifically, suppose the redundancy of a new inequality $c^Tz\leq r^+$ is tested against the partially constructed MAS given by $Mz\leq N$,  where $M=[H_x,\ H_v]$, $N=r^+h^-$, and $z=(x,\ v)$. 
From Lemma \ref{gamma}, scaling $r^+$ by $\gamma \in \mathbb{R}^+$ (i.e., replacing $r^+$ with $\gamma r^+$) scales the optimal solution of the LP by $\gamma$. However, the constraint being tested for redundancy is also scaled by $\gamma$, which implies that the redundancy of $c^Tz\leq \gamma r^+$ is unaffected by $\gamma$. Therefore, we conclude that the admissibility index of $O_\infty^-(r^+)$ is unaffected by a positive scaling on $r^+$.  

Now assume $r<0$, denoted $r^-$. A  polyhedral representation of MAS for the case of $S=1$ is given by 
\begin{equation}
  O_{\infty}^-(r^-)=\Big\{(x,v):H_x x + H_v v \leq r^- h^+ \Big\} 
  \label{eqn:oinf--}
\end{equation}
Note that we choose $h^+$ in \eqref{eqn:oinf--} to ensure that the steady-state constraint is indeed contracted. Similarly, a representation of MAS for the case of $S=-1$ is:
\begin{equation}
 O_{\infty}^+(r^-)=\Big\{(x,v):H_x x + H_v v \geq r^- h^- \Big\}
 \label{eqn:oinf+-}
\end{equation}
Using Lemma \ref{gamma}, we can conclude that the admissibility index of $O_{\infty}^-(r^-)$ in \eqref{eqn:oinf--} is also unaffected by a positive scaling on $r^-$ and that the same argument holds true for $O_{\infty}^+(r^-)$ in \eqref{eqn:oinf+-}.

Now suppose that $r$ is allowed to be any non-zero real number. If $r$ changes sign, the geometric properties of the MAS change (graphical argument presented in Fig. \ref{fig:Ineqs}), which in turn changes the linear programs in \eqref{eqn:RG_lp} and \eqref{eqn:RG_lp-}.\begin{figure}
    \centering
    \includegraphics[width=\columnwidth]{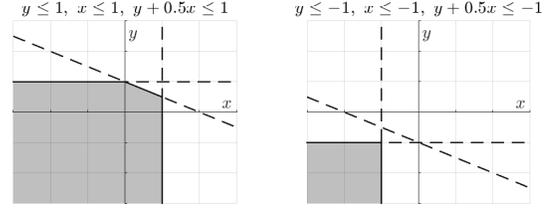}
    \caption{Two systems of linear inequalities in $\mathbb{R}^2$: constraints are positive in the left plot and negative in the right plot.}
    \label{fig:Ineqs}
\end{figure} However, with the help of  Lemma \ref{x-x} below, we prove that the admissibility index of \eqref{eqn:oinf-}, is equivalent to the admissibility index of \eqref{eqn:oinf+-}, and similarly that the admissibility index of \eqref{eqn:oinf+}, is equivalent to the admissibility index of \eqref{eqn:oinf--}. 
\begin{lemma}
\label{x-x}
suppose the maximizer of 
\begin{equation}
\begin{aligned}
    \mathrm{max}\ c^{T}z \quad
    \mathrm{subject\ to}\quad Mz\leq N
\end{aligned}
\label{eqn:lpe1}
\end{equation}
is given by $z^*$. Then, the minimizer of 
\begin{equation}
\begin{aligned}
    \mathrm{min}\ c^{T}z\quad
    \mathrm{subject\ to}\quad Mz\geq -N
\end{aligned}
\label{eqn:lpe2}
\end{equation}
is given by $-z^*$.
\end{lemma}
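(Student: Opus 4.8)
The plan is to obtain \eqref{eqn:lpe2} from \eqref{eqn:lpe1} through the linear change of variables $z \mapsto -z$, which maps one feasible polyhedron onto the other and reverses the direction of optimization. This mirrors the bookkeeping-style argument used in the proof of Lemma~\ref{gamma}.

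First I would introduce $w = -z$ and rewrite the constraint $Mz \geq -N$ of \eqref{eqn:lpe2}. Substituting $z = -w$ yields $-Mw \geq -N$, i.e., $Mw \leq N$, which is exactly the feasible set of \eqref{eqn:lpe1}. Hence $z$ is feasible for \eqref{eqn:lpe2} if and only if $w = -z$ is feasible for \eqref{eqn:lpe1}, and this correspondence is a bijection between the two feasible sets.

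Next I would track the objective. Under the same substitution, $c^T z = c^T(-w) = -c^T w$, so minimizing $c^T z$ over the feasible set of \eqref{eqn:lpe2} is equivalent to maximizing $c^T w$ over the feasible set of \eqref{eqn:lpe1}. By hypothesis the latter is maximized at $w = z^*$, so translating back through $z = -w$ shows that \eqref{eqn:lpe2} is minimized at $z = -z^*$ (with optimal value $-c^T z^*$).

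I do not anticipate a genuine obstacle, since the result is essentially a verification that negation is an order-reversing bijection between the two problems. The one point worth stating carefully is precisely that bijectivity: because $z \leftrightarrow -z$ is one-to-one between the feasible sets and negates the objective, existence and uniqueness of the optimizer transfer intact, so that $-z^*$ is \emph{the} minimizer and not merely \emph{a} minimizer.
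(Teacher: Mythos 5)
Your proof is correct, and it is exactly the argument the paper has in mind: the authors omit the proof of Lemma~\ref{x-x}, stating only that it ``follows similarly to Lemma~\ref{gamma},'' i.e., by the same change-of-variables bookkeeping you carry out explicitly with $w=-z$ (feasibility bijection, objective negation, optimizer transfer). Nothing further is needed.
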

The proof for Lemma \ref{x-x} is simple and follows similarly to Lemma \ref{gamma}, we thus omit the proof. From Lemma \ref{x-x}, we can conclude that if the maximum value of the objective function in  \eqref{eqn:lpe1} is $c^Tz^*$, then the minimum of \eqref{eqn:lpe2} is $c^T(-z^*)$. 
Applying this result to \eqref{eqn:oinf-}, we see that we will obtain the same admissibility index as for \eqref{eqn:oinf+-}. The same can be said between \eqref{eqn:oinf+} and \eqref{eqn:oinf--}.
 
 If we combine the information presented in Lemmas \ref{gamma} and \ref{x-x}, it follows that, for a given $r$, $O_\infty^+$ and $O_\infty^-$ can both be uniquely defined by only two representations of $H_x,\ H_v,$ and $h$, which is the novel result of this subsection. This is summarized in Table \ref{tab:4_combinations}, where superscripts $^+$ and $^-$ are used to denote the two possible representations. 
 This table also highlights the relationship between the dynamic MAS, $O_{\infty,tr} \left(r(t),\ y_{tr}(t)\right)$, and the sets $O_\infty^-(r^+)$, $O_\infty^+(r^+)$, $O_\infty^-(r^-)$, and $O_\infty^+(r^-)$ in \eqref{eqn:oinf-}, \eqref{eqn:oinf+}, \eqref{eqn:oinf--} , and \eqref{eqn:oinf+-} respectively.

\renewcommand{\arraystretch}{1.4}
\begin{table}[ht!]
    \setlength\extrarowheight{6pt}
    \begin{tabular}{c|c|c|}
        \multicolumn{1}{c}{} & \multicolumn{1}{c}{$\{y_{tr}\leq r\}$} & \multicolumn{1}{c}{$\{y_{tr}> r\}$}\\
        \cline{2-3}
        \multirow{2}{.5em}{\rotatebox{90}{\scalebox{.9}[1.0]{$r=r^+>0$}}} & \multirow{1}{12em}{Case 1: $O_{\infty,tr}=O_{\infty}^-(r^+)$} & \multirow{1}{12em}{Case 2: $O_{\infty,tr}=O_{\infty}^+(r^+)$} \\
        & \multirow{1}{12em}{Minimal representation:\\$\big\{H_{x}^-x + H_{v}^-v \leq r h^-\big\}$} & \multirow{1}{12em}{Minimal representation:\\$\big\{H_{x}^+x + H_{v}^+v \geq r h^+\big\}$}\\
        \cline{2-3}
        \multirow{2}{.5em}{\rotatebox{90}{\scalebox{.9}[1.0]{$r=r^-<0$}}} & \multirow{1}{12em}{Case 3: $O_{\infty,tr}=O_{\infty}^-(r^-)$} & \multirow{1}{12em}{Case 4: $O_{\infty,tr}=O_{\infty}^+(r^-)$} \\ 
        & \multirow{1}{12em}{Minimal representation:\\$\big\{H_{x}^+x + H_{v}^+v \leq r h^+\big\}$} & \multirow{1}{12em}{Minimal representation:\\ $\big\{H_{x}^-x + H_{v}^-v \geq r h^-\big\}$}\\
        \cline{2-3}
    \end{tabular}
    \vspace{.1in}
    \caption{The representation of the dynamic  maximal admissible set.}
    \label{tab:4_combinations}
\end{table}
\renewcommand{\arraystretch}{1}

Note that the MASs in Cases $1$ and $4$ share the same matrices $H_x^-, H_v^-, h^-$. Similarly, the MASs in Cases $2$ and $3$ share the same  matrices $H_x^+, H_v^+, h^+$. Furthermore, these matrices are constant and  do not depend on the magnitude of $r$. Therefore, to construct these matrices, we can make simplifying assumptions on $r$. Specifically, to compute $H_x^-, H_v^-, h^-$, we can assume that $r=1$  and leverage the standard methods presented in Section \ref{review}. Similarly, to compute $H_x^+, H_v^+, h^+$, we can assume that $r=-1$ and use the standard methods.

Note that the only remaining case to consider is $r=0$. In this case, the constraint on the steady state cannot be shrunk (because $(1-\epsilon)r=(1+\epsilon)r=r$ when $r=0$), resulting in a MAS that may not necessarily be finitely determined. We resolve this by approximating it by the representation with the higher number of rows. 
This completes the answer to the first question raised in the Introduction around the admissibility index of the dynamic MAS.

We next study the geometric properties of the dynamic MAS as a function of $r$ using Propositions \ref{thm3} and \ref{subset} below.    
\begin{proposition}
\label{thm3}
Let $O_\infty^-(r^+)$, $O_\infty^+(r^+)$, $O_\infty^-(r^-)$ and $O_\infty^+(r^-)$ be defined by \eqref{eqn:oinf-}, \eqref{eqn:oinf+}, \eqref{eqn:oinf--}, and \eqref{eqn:oinf+-} respectively; and let $r_1,\ r_2 \in \mathbb{R}\setminus \{0\}$. Then, the following holds.
\begin{enumerate}[i)]
    \item If $\frac{r_1}{r_2} > 0$, then $O_\infty^- (r_1) = \frac{r_1}{r_2} O_\infty^- (r_2)$ and $O_\infty^+ (r_1) = \frac{r_1}{r_2} O_\infty^+ (r_2)$
    \item If $\frac{r_1}{r_2} < 0$, then $O_\infty^- (r_1) = \frac{r_1}{r_2} O_\infty^+ (r_2)$ and $O_\infty^+ (r_1) = \frac{r_1}{r_2} O_\infty^- (r_2)$
\end{enumerate} 
\end{proposition}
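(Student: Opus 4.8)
The plan is to prove both claims directly from the polyhedral descriptions in \eqref{eqn:oinf-}--\eqref{eqn:oinf+-} by a scaling argument, exploiting the fact that the matrices $H_x, H_v$ appearing in all four sets are the \emph{same} (they are built from the same $C_{tr}, A, B$ data and differ only in the direction of the inequality and the choice of steady-state contraction vector $h^-$ versus $h^+$). The key structural observation I would record first is that, because of the DC-gain-one assumption and the definitions, the first block of rows of $H_x$ is $0$ and the first block of $H_v$ equals $1$, so that the steady-state row of $O_\infty^-(r^+)$ reads $v \le (1-\epsilon) r^+$ and that of $O_\infty^-(r^-)$ reads $v \le (1+\epsilon) r^-$; the sign of $r$ is exactly what flips $1-\epsilon$ into $1+\epsilon$ and vice versa. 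This is why $h^-$ and $h^+$ get swapped when $r$ changes sign, and it is the crux of matching the representations.

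For part i), assume $\tfrac{r_1}{r_2} > 0$ and write $\gamma = \tfrac{r_1}{r_2} > 0$, so $r_1 = \gamma r_2$ and $r_1, r_2$ have the same sign. I would split on whether that common sign is positive or negative. If both are positive, $O_\infty^-(r_2) = \{(x,v): H_x x + H_v v \le r_2 h^-\}$ by \eqref{eqn:oinf-}; multiplying the defining inequality through by $\gamma > 0$ (which preserves the inequality) and performing the change of variables $(x,v)\mapsto(\gamma x,\gamma v)$ shows that $(x,v)\in\gamma O_\infty^-(r_2)$ iff $H_x(x/\gamma) + H_v(v/\gamma)\le r_2 h^-$ iff $H_x x + H_v v \le \gamma r_2 h^- = r_1 h^-$, which is precisely $O_\infty^-(r_1)$. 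If both are negative, the same manipulation using \eqref{eqn:oinf--} (which carries $h^+$ on the right) gives the identical conclusion because $h^+$ is used consistently on both sides. The $O_\infty^+$ statement is handled symmetrically, using \eqref{eqn:oinf+} when the common sign is positive and \eqref{eqn:oinf+-} when it is negative, noting that scaling by a positive $\gamma$ also preserves the $\ge$ inequalities. This is essentially an application of Lemma~\ref{gamma} at the level of the whole polyhedron rather than a single LP.

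For part ii), assume $\tfrac{r_1}{r_2} < 0$, so $\gamma = \tfrac{r_1}{r_2} < 0$ and $r_1, r_2$ have opposite signs; without loss of generality say $r_2 = r^+ > 0$ and $r_1 = r^- < 0$ (the other case is identical with the roles swapped). Then $O_\infty^-(r_1) = O_\infty^-(r^-) = \{(x,v): H_x x + H_v v \le r^- h^+\}$ by \eqref{eqn:oinf--}, while $O_\infty^+(r_2) = O_\infty^+(r^+) = \{(x,v): H_x x + H_v v \ge r^+ h^+\}$ by \eqref{eqn:oinf+}. Multiplying the latter's defining inequality by the \emph{negative} number $\gamma$ reverses it to $\gamma H_x x + \gamma H_v v \le \gamma r^+ h^+ = r^- h^+$, and the change of variables $(x,v)\mapsto(\gamma x,\gamma v)$ then identifies $\gamma O_\infty^+(r^+)$ with $O_\infty^-(r^-)$; crucially the $h^+$ vectors match because the positive-$r$ version of $O_\infty^+$ and the negative-$r$ version of $O_\infty^-$ both use $h^+$, which is exactly the bookkeeping set up in \eqref{eqn:oinf-}--\eqref{eqn:oinf+-}. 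The companion identity $O_\infty^+(r_1) = \gamma O_\infty^-(r_2)$ follows the same way, pairing \eqref{eqn:oinf+-} with \eqref{eqn:oinf-} and matching on $h^-$.

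I expect the only real obstacle to be the sign/vector-bookkeeping: one must be careful that multiplying a $\le$ inequality by a negative scalar flips it to $\ge$ (so a ``$-$'' set maps to a ``$+$'' set, as claimed), and that the contraction vectors line up ($h^+$ with $h^+$, $h^-$ with $h^-$) — which is precisely the reason the paper introduced distinct $h^-$ and $h^+$ in the footnote. Everything else is the elementary scaling/change-of-variables manipulation already used in the proof of Lemma~\ref{gamma}, applied uniformly to every defining inequality of the polyhedron rather than to a single tested inequality, so no genuinely new idea is needed beyond organizing the four cases via Table~\ref{tab:4_combinations}.
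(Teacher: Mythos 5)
Your proof is correct and follows essentially the same route as the paper's: a direct scaling/change-of-variables argument on the defining inequalities of \eqref{eqn:oinf-}--\eqref{eqn:oinf+-}, with the inequality flip under negative scaling and the $h^-$/$h^+$ bookkeeping handled exactly as in the paper's case analysis. No gaps to report.
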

\begin{proof}
For clarity throughout the proof, let superscripts $^+$ (positive) and $^-$ (negative) denote the signs of $r_1$ and $r_2$. We prove case $i$) and $ii$) for $O_\infty^-(r_1^+)$; the rest of the cases can be proven similarly.

$i$) 
Let $(x,v)\in O_\infty^- (r_1^+)$. Then, it follows from \eqref{eqn:oinf-} that
\begin{equation*}
\begin{aligned}
&H_x x + H_v v \leq r_1^+ h^- \quad \\
\Longleftrightarrow \quad
&\frac{r_2^+}{r_1^+} \left(H_x x + H_v v\right) \leq \frac{r_2^+}{r_1^+} \left(r_1^+ h^-\right)\\
\Longleftrightarrow \quad &H_x \left(\frac{r_2^+}{r_1^+} x\right) + H_v \left(\frac{r_2^+}{r_1^+} v\right) \leq r_2^+ h^- \quad\\ 
\Longleftrightarrow \quad
&\frac{r_2^+}{r_1^+} \left(x,v\right)\in O_{\infty}^-(r_2^+) \\
 \Longleftrightarrow \quad
 &(x,v)\in \frac{r_1^+}{r_2^+}O_{\infty}^-(r_2^+) 
\end{aligned}
\end{equation*}

$ii$) 
Let $(x,v)\in O_\infty^- (r_1^+)$. Then, it follows from \eqref{eqn:oinf-} that
\begin{equation*}
\begin{aligned}
&H_x x + H_v v \leq r_1^+ h^- \quad\\ \Longleftrightarrow \quad
&\frac{r_2^-}{r_1^+} \left(H_x x + H_v v\right) \geq \frac{r_2^-}{r_1^+} \left(r_1^+ h^-\right)\\
\Longleftrightarrow \quad &H_x \left(\frac{r_2^-}{r_1^+} x\right) + H_v \left(\frac{r_2^-}{r_1^+} v\right) \geq r_2^- h^- \quad\\ \Longleftrightarrow \quad
&\frac{r_2^-}{r_1^+} \left(x,v\right)\in O_{\infty}^+(r_2^-) \\
 \Longleftrightarrow \quad
 &(x,v)\in \frac{r_1^+}{r_2^-}O_{\infty}^+(r_2^-) 
\end{aligned}
\end{equation*}
The remaining $6$ cases $$O_\infty^-(r_1^-),\ O_\infty^+(r_1^+),\ O_\infty^+(r_1^-)\ \text{for\ both}\ i)\ \text{and}\ ii)$$ can be proven similarly. Furthermore, the reverse direction of each case can be proven.
\end{proof}
\noindent Proposition \ref{thm3} sheds light on the geometric relationship between $O_\infty^+$ and $O_\infty^-$. For example, for positive values of $r$, $O_\infty^-(r)$ is scaled radially from the origin as $r$ varies.

Another important result, which  ties into recursive feasibility of the RG-DC as addressed in Subsection \ref{RG}, is as follows. 

\begin{proposition}
\label{subset}
Suppose $r_2\geq r_1$, then $O_\infty^-(r_1)\subseteq O_\infty^-(r_2)$. Similarly, if $r_2\leq r_1$, then $O_\infty^+(r_1)\subseteq O_\infty^+(r_2)$. 
\end{proposition}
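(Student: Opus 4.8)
The plan is to reduce the claim to elementary manipulations of the defining inequalities in \eqref{eqn:oinf-}, \eqref{eqn:oinf+}, \eqref{eqn:oinf--}, and \eqref{eqn:oinf+-}, handling several sign cases for $r_1$ and $r_2$. The key observation is that the matrices $H_x,H_v$ are fixed and the right-hand sides are $r\,h^{\pm}$, so a set-inclusion becomes a statement about which of two parallel half-spaces (indexed by $r$) is the larger one; since $O_\infty^-$ and $O_\infty^+$ always contain the origin (the steady-state constraint is a contracted version of a set containing $y=0$ in its interior only when $r\neq 0$ — more precisely, $(0,0)$ satisfies $H_x\cdot 0+H_v\cdot 0 = 0$, and $0 \le r^+ h^-$ holds since $h^-\ge 0$ componentwise, with the analogous sign statements in the other cases), nestedness of such half-space systems is governed entirely by the relative magnitudes and signs of $r_1$ and $r_2$.

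First I would prove the first claim, $r_2\ge r_1 \Rightarrow O_\infty^-(r_1)\subseteq O_\infty^-(r_2)$, by splitting on the signs of $r_1,r_2$. Case (a): $0<r_1\le r_2$. Here both sets are of the form $\{H_x x+H_v v\le r h^-\}$ with $h^-\ge 0$ componentwise, so $r_1 h^- \le r_2 h^-$; hence any $(x,v)$ satisfying the inequality with $r_1$ on the right also satisfies it with the larger $r_2$, giving the inclusion directly. Case (b): $r_1<0\le r_2$ (or $r_1<0<r_2$; the $r_2=0$ sub-case is excluded since the proposition's sets are defined only for nonzero $r$, but $r_2>0$ works the same way). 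Now $O_\infty^-(r_1)=O_\infty^-(r_1^-)=\{H_x x+H_v v\le r_1 h^+\}$ by \eqref{eqn:oinf--}, while $O_\infty^-(r_2)=O_\infty^-(r_2^+)=\{H_x x+H_v v\le r_2 h^+ {}\}$— wait, here I must be careful: for $r>0$ the relevant vector is $h^-$ and for $r<0$ it is $h^+$, per Table~\ref{tab:4_combinations}. So I would instead argue: if $(x,v)\in O_\infty^-(r_1)$ with $r_1<0$, then $H_x x+H_v v\le r_1 h^+ <0$ componentwise where $h^+>0$ is positive; since $r_2>0$ and $h^-\ge 0$, we have $r_1 h^+ < 0 \le r_2 h^-$, and since $O_\infty^-(r_2)=\{H_x x + H_v v \le r_2 h^-\}$, the point lies in $O_\infty^-(r_2)$. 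Case (c): $r_1\le r_2<0$. Both sets use $h^+$: $O_\infty^-(r_i)=\{H_x x+H_v v\le r_i h^+\}$, and since $r_1\le r_2$ with $h^+\ge 0$, we get $r_1 h^+\le r_2 h^+$, so the inclusion follows as in case (a). The second claim, $r_2\le r_1\Rightarrow O_\infty^+(r_1)\subseteq O_\infty^+(r_2)$, is entirely symmetric: $O_\infty^+$ sets are defined by $\ge$ inequalities, so shrinking $r$ enlarges the feasible set, and one runs through the sign cases $r_2\le r_1<0$, $r_2<0\le r_1$, and $0<r_2\le r_1$ exactly as above, using $h^+$ for positive arguments and $h^-$ for negative arguments and the componentwise nonnegativity of $h^\pm$.

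The main obstacle I expect is bookkeeping rather than mathematical depth: one must be disciplined about the fact that the "minimal representation" vector attached to a positive $r$ is $h^-$ while the one attached to a negative $r$ is $h^+$ (Cases 1 and 4 versus Cases 2 and 3 in Table~\ref{tab:4_combinations}), so the mixed-sign cases compare inequalities with \emph{different} right-hand-side vectors. The clean way around this is the sign argument used in case (b) above: when $r_1<0$, every feasible $(x,v)$ makes $H_x x+H_v v$ strictly negative componentwise (since $r_1 h^+$ is a negative vector), and when $r_2>0$ the right-hand side $r_2 h^-$ is nonnegative, so the inclusion is immediate without needing $h^-=h^+$. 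Alternatively — and perhaps more elegantly for the write-up — one can invoke Proposition~\ref{thm3}: if $r_1,r_2$ have the same sign then $O_\infty^-(r_1)=\tfrac{r_1}{r_2}O_\infty^-(r_2)$ with $\tfrac{r_1}{r_2}\in(0,1]$ when $|r_1|\le|r_2|$, and since $O_\infty^-(r_2)$ is convex and contains the origin, scaling by a factor in $(0,1]$ yields a subset; one then only needs the mixed-sign case handled directly. I would present the same-sign cases via this convexity-plus-Proposition~\ref{thm3} route and dispatch the mixed-sign case with the one-line sign observation.
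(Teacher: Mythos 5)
Your direct case analysis (cases (a)--(c), together with the mirrored cases for $O_\infty^+$) is correct and is essentially the paper's own proof: the paper also splits on the signs of $r_1,r_2$, compares right-hand sides componentwise ($r_1h^-\le r_2h^-$ when both are positive, $r_1h^+\le r_2h^+$ when both are negative), and disposes of the mixed-sign case by exactly your sign observation, $r_1h^+\le 0\le r_2h^-$. If you write up that version, nothing further is needed.

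However, the alternative presentation you say you would actually use -- Proposition~\ref{thm3} plus ``convex and contains the origin, so scaling by a factor in $(0,1]$ gives a subset'' -- rests on a false premise and fails in half of the same-sign cases. The claim that $O_\infty^-$ and $O_\infty^+$ always contain the origin is wrong: $(0,0)$ satisfies $0\le rh^-$ only for $r>0$ and $0\ge rh^-$ only for $r<0$, so $(0,0)\in O_\infty^-(r)$ for $r>0$ and $(0,0)\in O_\infty^+(r)$ for $r<0$, but $(0,0)\notin O_\infty^-(r)$ when $r<0$ (the requirement $H_x\cdot 0+H_v\cdot 0\le rh^+<0$ fails, cf.\ \eqref{eqn:oinf--}) and $(0,0)\notin O_\infty^+(r)$ when $r>0$ (cf.\ \eqref{eqn:oinf+}). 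Moreover, in precisely those cases the scale factor points the wrong way: for the first statement with $r_1\le r_2<0$ one has $r_1/r_2\ge 1$, not in $(0,1]$, and symmetrically for $0<r_2\le r_1$ in the second statement, so the ``shrink a convex set containing the origin'' argument is simply unavailable there. The inclusions are still true in those cases, but only your direct componentwise comparison (which is the paper's argument) establishes them; either drop the origin/convexity shortcut or restrict it to the cases $0<r_1\le r_2$ for $O_\infty^-$ and $r_2\le r_1<0$ for $O_\infty^+$, where it is valid.
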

\begin{proof}
We prove the first statement of Proposition \ref{subset}. The second statement follows similarly.\\
Let $r_2\geq r_1$, and $(x,v)\in O_\infty^-(r_1)$. We first consider the case where $r_1\geq 0$, denoted $r_1^+$, and where $r_2\geq 0$, denoted $r_2^+$.  From \eqref{eqn:oinf-}, it is true that $H_x x + H_v v \leq r_1^+ h^-$. Therefore, because $r_2^+h^- \geq r_1^+h^-$, $H_x x + H_v v \leq r_2^+ h^-$. From here it can be concluded that any $(x,v) \in O_\infty^-(r_1^+)$ also belongs to $O_\infty^-(r_2^+)$. Therefore, $O_\infty^-(r_1^+)\subseteq O_\infty^-(r_2^+)$.
Now let us consider the case where $r_1\leq0$, denoted $r_1^-$, and where $r_2\geq0$, denoted $r_2^+$. Clearly, $r_2^+ \geq r_1^-$. Let $(x,v)\in O_\infty^-(r_1^-)$.  From \eqref{eqn:oinf--}, it is true that $H_x x + H_v v \leq r_1^- h^+$. Therefore, because $r_2^+h^- \geq r_1^-h^+$, $H_x x + H_v v \leq r_2^+ h^-$. From here it can be concluded that any $(x,v) \in O_\infty^-(r_1^-)$ also belongs to $O_\infty^-(r_2^+)$. Therefore, $O_\infty^-(r_1^-)\subseteq O_\infty^-(r_2^+)$.
Finally, we consider the case where $r_1\leq0$, denoted $r_1^-$, and where $r_2\leq0$, denoted $r_2^-$. From \eqref{eqn:oinf--}, it is true that $H_x x + H_v v \leq r_1^- h^+$. Therefore, because $r_2^-h^+ \geq r_1^-h^+$, $H_x x + H_v v \leq r_2^- h^+$. From here it can be concluded that any $(x,v) \in O_\infty^-(r_1^-)$ also belongs to $O_\infty^-(r_2^-)$. Therefore, $O_\infty^-(r_1^-)\subseteq O_\infty^-(r_2^-)$.\\
The three cases for the second statement of Proposition \ref{subset}, regarding $r_2\leq r_1$, can be proven similarly.

\end{proof}
Note that $O_\infty^-(r_1)\not\subseteq O_\infty^-(r_2)$ if $r_2 < r_1$, and $O_\infty^+(r_1)\not\subseteq O_\infty^+(r_2)$ if $r_2 > r_1$. This result implies that, while the dynamic MAS is positively invariant for a fixed $r$, it may not be positively invariant if $r$ varies over time (conditions for positive invariance under time-varying $r$ are given in Proposition \ref{subset}). The implication of this in terms of constraint management will be discussed in the next subsection. The above two propositions provide the answer to the second question raised in the Introduction around the geometric properties of the dynamic MAS. 

\subsection{Computational aspects and properties of RG-DC}
\label{RG}

To implement the RG-DC, 
the values of $y_{tr}(t)$ and $r(t)$ are used at every timestep to determine the appropriate MAS from Table~ \ref{tab:4_combinations}. This MAS is then employed in \eqref{eq:kappa_for_RG} to calculate $\kappa$. We denote the resulting solution by $\kappa_{tr}$. If static constraints are also imposed on the system, we compute \eqref{eq:kappa_for_RG} separately with $O_{\infty,st}$, yielding $\kappa_{st}$. The RG-DC then chooses the minimum of $\{\kappa_{tr},\kappa_{st}\}$ and applies the solution to \eqref{eqn:v[t]} to compute $v(t)$.    

 As discussed in the previous subsection, the dynamic MAS may or may not be positively invariant if $r$ changes in real-time (conditions for positive invariance were provided in Proposition \ref{subset}). This implies that the LP in \eqref{eq:kappa_for_RG} may be infeasible, which means that   $\kappa_{tr}$ may not exist. 
The traditional reference governor handles this situation by forcing $\kappa$ to be 0 (i.e., $v(t)=v(t-1)$). 
RG-DC handles infeasibilities in the same manner. Specifically, if at the current timestep the LP in \eqref{eq:kappa_for_RG} is infeasible, we set $\kappa = 0$. 
In such cases, overshoot is not preventable.
However, we maintain $\kappa_{tr} \in [0,1]$ in our RG-DC formulation to assure stability at the expense of overshoot mitigation performance. 
 We demonstrate a scenario where the RG-DG forces $\kappa=0$ in Section \ref{sim}.

We discuss the stability of RG-DC in Theorem \ref{stability} below. 
\begin{theorem}
\label{stability}
The RG-DC loop is BIBO stable, and for a constant $r$, $v$ converges to a constant.
\end{theorem}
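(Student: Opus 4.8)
The plan is to exploit the single structural fact that drives everything: by \eqref{eqn:v[t]}, the governed reference satisfies $v(t) = (1-\kappa(t))\,v(t-1) + \kappa(t)\,r(t)$ with $\kappa(t)\in[0,1]$, where $\kappa(t)=\min\{\kappa_{tr}(t),\kappa_{st}(t)\}$ and, by the convention adopted above, an infeasible LP \eqref{eq:kappa_for_RG} yields $\kappa(t)=0$, so $\kappa(t)$ is always well defined. In other words, $v(t)$ is a convex combination of its previous value and the current raw reference. Neither the precise value of $\kappa(t)$ nor its (generally nonlinear, state-dependent) dependence on the dynamic MAS plays any role in the argument — only the bound $\kappa(t)\in[0,1]$ matters.

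For BIBO stability, first I would show that $v$ is bounded whenever $r$ is. Let $\bar r = \sup_{t\ge 0}|r(t)| < \infty$. From the convex-combination form, $|v(t)| \le \max\{|v(t-1)|,\,|r(t)|\} \le \max\{|v(t-1)|,\,\bar r\}$, so by induction $|v(t)| \le \max\{|v(0)|,\,\bar r\}$ for all $t$; thus $v$ is a bounded signal. Since the closed-loop system \eqref{eqn:RGDC_system} is asymptotically stable ($A$ is Schur), its state response to the bounded input $v$ is bounded, and hence so are the outputs $y_{tr}$ and $y_{st}$. This establishes BIBO stability of the RG-DC loop.

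For convergence under a constant $r(t)\equiv r$, I would track the error $e(t) \triangleq v(t) - r$. Subtracting $r$ from the update law gives $e(t) = (1-\kappa(t))\,e(t-1)$. Since $1-\kappa(t)\in[0,1]$, the sequence $\{e(t)\}$ never changes sign (it keeps the sign of $e(0)$, and once it reaches $0$ it stays there) and $|e(t)|$ is nonincreasing. Hence $\{v(t)\}$ is a monotone, bounded real sequence and therefore converges to a constant $v_\infty$; equivalently, $|e(t)|$ decreases monotonically to some limit $L\ge 0$ and $v(t)\to r\pm L$. This proves the second claim.

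The argument has no genuine obstacle; the one point that needs care is not overclaiming. One might hope to prove $v_\infty=r$ (i.e.\ $L=0$), but this is false in general — when overshoot cannot be prevented the RG-DC may permanently hold $\kappa(t)=0$, freezing $v$ short of $r$ — so the theorem asserts only convergence to \emph{some} constant, and the proof should stop there. A secondary bookkeeping point is to confirm at the outset that $\kappa(t)$ is always defined and lies in $[0,1]$, which is exactly what the infeasibility-handling convention ($\kappa=0$) guarantees; with that in hand, both claims reduce to the elementary facts above about convex combinations and monotone bounded sequences.
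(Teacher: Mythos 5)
Your proof is correct and follows essentially the same route as the paper: boundedness of $v$ from the convex-combination form of \eqref{eqn:v[t]} with $\kappa\in[0,1]$, BIBO stability from asymptotic stability of \eqref{eqn:RGDC_system}, and convergence for constant $r$ from monotonicity and boundedness of $v(t)$. Your explicit error recursion $e(t)=(1-\kappa(t))e(t-1)$ and the remark that the limit need not equal $r$ simply make the paper's one-line monotonicity argument more precise.
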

\begin{proof}
From \eqref{eqn:v[t]}, and with $\kappa \in [0,1]$, it follows that $v(t)$ is a convex combination of $r(t)$ and $v(t-1)$, both of which are bounded. Therefore, $v(t)$ is bounded. Boundedness of $v(t)$ and asymptotic stability of \eqref{eqn:RGDC_system} imply BIBO stability of the system. Furthermore, $v(t)$ forms a monotonic sequence bounded by  $r$, which implies convergence.
\end{proof}
Note that this result is similar to the stability result of the standard RG. However, we present it formally to reinforce the claim that, like the RG, the RG-DC is BIBO stable.


The computational complexity of the RG-DC is similar to that of the standard RG with an additional constraint on the tracking output. Note that the additional logic introduced to determine the MAS characterization from Table \ref{tab:4_combinations} is negligible when compared to the processing times associated with the calculation of $\kappa$ in \eqref{eq:kappa_for_RG}. The RG-DC is also comparable to the RG in terms of memory requirements.

Finally, note that external disturbances, model uncertainty, and plant variability can be naturally incorporated in the RG-DC framework. This is done, similar to standard RG, by ``robustifying" (i.e., shrinking) the MAS using the ideas from Pontryagin subtraction (P-subtraction)  \cite{Gilbert_Kolmanovsky_1995} and polytopic uncertainties \cite{Pluymers_2005}. We will show an example of this in the next section.

The above analyses provide complete answers to questions 3 -- 6 raised in the Introduction around the properties of RG-DC.

We now present the RG-DC algorithm (see Algorithm \ref{alg:RGDC}), which can be used to enforce overshoot mitigation constraints using $O_{\infty,tr}$ and static constraints using $O_{\infty,st}$. In preparation for Algorithm \ref{alg:RGDC}, assume that the two representations of the dynamic MAS, namely $H_{x}^-,H_{v}^-,h^-$ and $H_{x}^+,H_{v}^+,h^+$, have been calculated. Let $H_{x,tr},H_{v,tr},h_{tr}$ be the representation with the larger number of rows, where $h_{tr}$ is a vector of all $1$s. Additionally, let $H_{x,st},H_{v,st},h_{st}$  denote the matrices that define $O_{\infty,st}$. The RG-DC algorithm is as follows:  

\begin{algorithm}[h!]
    \caption{RG-DC}
    \vspace{.03in}
    \textbf{Inputs:} \\$y(t),r(t),x(t),v(t-1),H_{x,tr},H_{v,tr},h_{tr},H_{x,st},H_{v,st},h_{st},\epsilon$ \\\textbf{Output:} \\$v(t)$
    \vspace{.03in}
    \hrule
    \vspace{.03in}
  \begin{algorithmic}[1]
    \IF{$y_{tr}(t) \leq r(t)$}
        \IF{$r(t) > 0$}
            \STATE first row of $h_{tr} = (1-\epsilon)$
        \ELSE
            \STATE first row of $h_{tr} = (1+\epsilon)$
        \ENDIF
        \FOR{$j$ = each row in $0_{\infty,tr}$}
            \STATE $n=h_{tr}(j)r(t)-H_{x,tr}(j) x(t)-H_{v,tr}(j) v(t-1)$
            \STATE $d=H_{v,tr}(j)(r(t)-v(t-1))$
            \STATE $\kappa(j)=\mathrm{kappa}(n,d)$
        \ENDFOR
    \ELSE
        \IF{$r(t) > 0$}
            \STATE first row of $h_{tr} = (1+\epsilon)$
        \ELSE
            \STATE first row of $h_{tr} = (1-\epsilon)$
        \ENDIF
        \FOR{$j$ = each row in $0_{\infty,tr}$}
            \STATE $n=-h_{tr}(j)r(t)+H_{x,tr}(j) x(t)+H_{v,tr}(j) v(t-1)$
            \STATE $d=-H_{v,tr}(j)(r(t)-v(t-1))$
            \STATE $\kappa(j)=\mathrm{kappa}(n,d)$
        \ENDFOR
    \ENDIF
    \STATE $\kappa_{tr} = \mathrm{min}(\kappa)$
    \IF{there are any static constraints}
        \STATE use standard RG algorithm with $O_{\infty,st}$ to obtain $\kappa_{st}$
    \ELSE
        \STATE $\kappa_{st}=1$
    \ENDIF
    \STATE $\kappa^*=\mathrm{min}(\kappa_{tr},\kappa_{st})$
    \STATE $v(t) = v(t-1)+\kappa^*(r(t)-v(t-1))$
    \end{algorithmic}
    \vspace{.03in}
    \hrule
    \vspace{.05in}
    \textbf{function} $\mathrm{kappa}(n,d)$
    \vspace{.03in}
    \begin{algorithmic}[1]
    \IF{$n > 0$}
        \IF{$d>0$}
            \STATE $\kappa=\mathrm{min}(n/d,1)$
        \ELSE
            \STATE $\kappa=1$
        \ENDIF
    \ELSE
        \STATE $\kappa=0$
    \ENDIF
    \STATE \textbf{return} $\kappa$
    \end{algorithmic}
    \vspace{.03in}
    
  \textbf{end function}
  \vspace{.03in}
    \label{alg:RGDC}
\end{algorithm}

\section{Illustrative Examples}
\label{sim}
\subsection{System model}
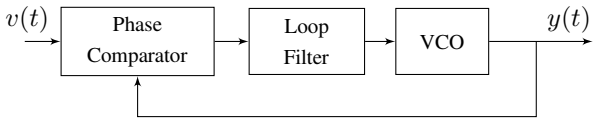
\begin{figure}[ht!]
\centering
\begin{tikzpicture}[auto, node distance=1.5cm,>=latex']
\node [input, name=rinput] (rinput) {};
\node [block, right of=rinput,text width=1.8cm,align=center] (comparator) {{\footnotesize Phase Comparator}};
\node [block, right of=comparator,node distance=2.25cm,text width=1.3cm,align=center] (filter) {{\footnotesize Loop Filter}};
\node [block, right of=filter, node distance=1.8cm, text width=1cm,align=center] (vco) {{\footnotesize VCO}};
\node [tmp, right of=vco, node distance=1.25cm] (feedback) {};
\node [output, right of=vco, node distance=2cm] (output) {};
\node [tmp, below of=comparator,node distance=1cm] (tmp1){$s$};
\draw [->] (rinput) -- node{\hspace{-0.4cm}$v(t)$} (comparator);
\draw [->] (comparator) -- node [name=v1]{}(filter);
\draw [->] (filter) -- node [name=v2]{}(vco);
\draw [->] (vco) -- node [name=y] {\hspace{0.75cm}$y(t)$}(output);
\draw [->] (feedback)|-(tmp1)-| node[pos=0] {} (comparator);
\end{tikzpicture}
\caption{Simple analog PLL system.} 
\label{fig:PLLblock}
\end{figure}

Consider the analog phase locked loop (PLL) system shown in Fig. \ref{fig:PLLblock}, which is comprised of a phase comparator, a loop filter, and a voltage controlled oscillator (VCO). The transfer function of the closed-loop PLL system around a nominal operating point is as follows \cite{li2000introduction}:
\begin{equation}
\begin{aligned}
    H_{PLL} = \frac{G_{lp}G_{VCO}}{s^2+G_{lp}s+G_{lp}G_{VCO}}
\end{aligned}
\label{eqn:PLL}
\end{equation}
where $G_{lp}$ is the  loop filter parameter and $G_{VCO}$ is the VCO gain.
Note that the closed-loop system has a DC gain of $1$ and perfect steady-state tracking of step commands. For the simulation, $G_{lp}$ was chosen to be $100$, and $G_{VCO}$ was chosen to be $2G_{lp}$ to yield an  underdamped system with damping ratio $\zeta=0.35$.  By selecting states as $x_1=y$ and $x_2=\dot{y}$, a zero order hold discretization of the system with a sample time of $T_s=1\times10^{-4}$ seconds is used to obtain the discrete state-space model of the closed-loop system.

Constraints are imposed on both outputs of the system. The dynamic constraint is applied to the tracking output $y_{tr} \triangleq y_1$ and a slew-rate limiting constraint ($-100\leq y_2\leq 100$) is applied to the constrained output $y_{st} \triangleq y_2$. The static and dynamic maximal admissible sets are constructed as discussed in Sections \ref{review} and \ref{DCM}. The resulting polyhedra, $O_\infty^-(r^+)$ and $O_\infty^-(r^-)$, both have admissibility indices of $342$ (the representations happen to be  the same for this example). Additionally, the admissibility index of $O_{\infty,st}$ is $130$.
\subsection{Response Evaluation}
Fig. \ref{fig:Comparison} shows the improved response characteristics of the governed system compared to the ungoverned system. Note that overshoot was completely eliminated without making any modifications to the PLL. Hence, the RG-DC is especially effective in overshoot mitigation of systems with inner loop controllers that cannot be tuned or adjusted (i.e `black box' systems), which is true for many off-the-shelf PLLs.     
\begin{figure}[ht]
    \centering
    \includegraphics[width = \columnwidth]{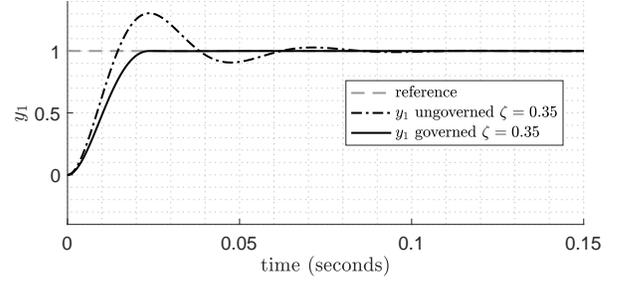}
    \caption{Comparison between step responses of the governed and ungoverned systems. No slew-rate limit is applied to the governed system in this simulation.}
    \label{fig:Comparison}
\end{figure}

\begin{figure}
    \centering
    \includegraphics[width = \columnwidth]{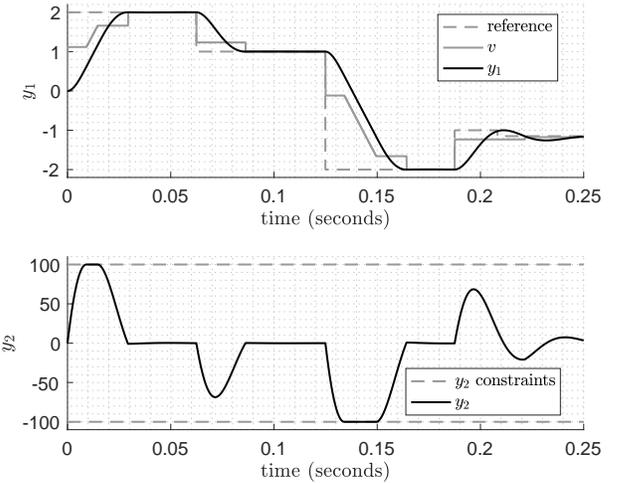}
    \caption{Governed system response to multiple steps inputs (slew-rate limit = 100)}
    \label{fig:4}
\end{figure}

A simulation of the RG-DC operating on step signals is presented in Fig. \ref{fig:4}. Notice that overshoot is mitigated for all but the last step at $t=0.208$ seconds. In this case, the reference changes quickly so that $(x(t),v(t-1))$ does not belong to the new MAS, which means constraint violation is not preventable. Hence, $\kappa$ has been set to $0$. Note that we maintain convergence to the reference at the sacrifice of reduced overshoot mitigation performance. 

\subsection{Robustness}
To test robustness under model uncertainty, we treat the VCO gain, $G_{VCO}$, as an unknown. We suppose, however, that $G_{VCO}$ is bounded as follows: $160\leq G_{VCO}\leq 240$. We compute a robust MAS for this  system  using Algorithm $1$ from \cite{Pluymers_2005}. 
Fig. \ref{fig:Plytope_Slices} compares the robust MAS with a standard MAS generated with the nominal model parameter $G_{VCO}=200$. From the figure, it is evident that the introduction of model uncertainty results in a more conservative MAS. Upon simulation of the governed system with the robust MAS, we see in Fig. \ref{fig:Uncertainty_Plot} that the constraints are not violated for systems at the vertices of system uncertainty (the values of $G_{VCO}$ used for the simulations are shown in the figure titles).
\begin{figure}
    \centering
    \includegraphics[width= \columnwidth]{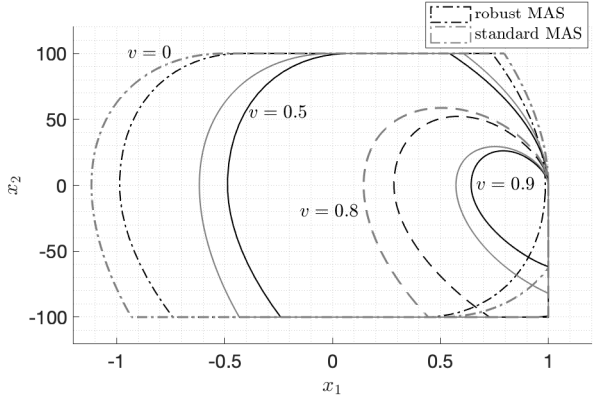}

\caption{Slices from the robust and standard MASs at various values of $v$. The dynamic constraint for this plot is $y_1\leq 1$ and the slew-rate limit is 100.}

    \label{fig:Plytope_Slices}
\end{figure}
\begin{figure}
    \centering
    \includegraphics[width = \columnwidth]{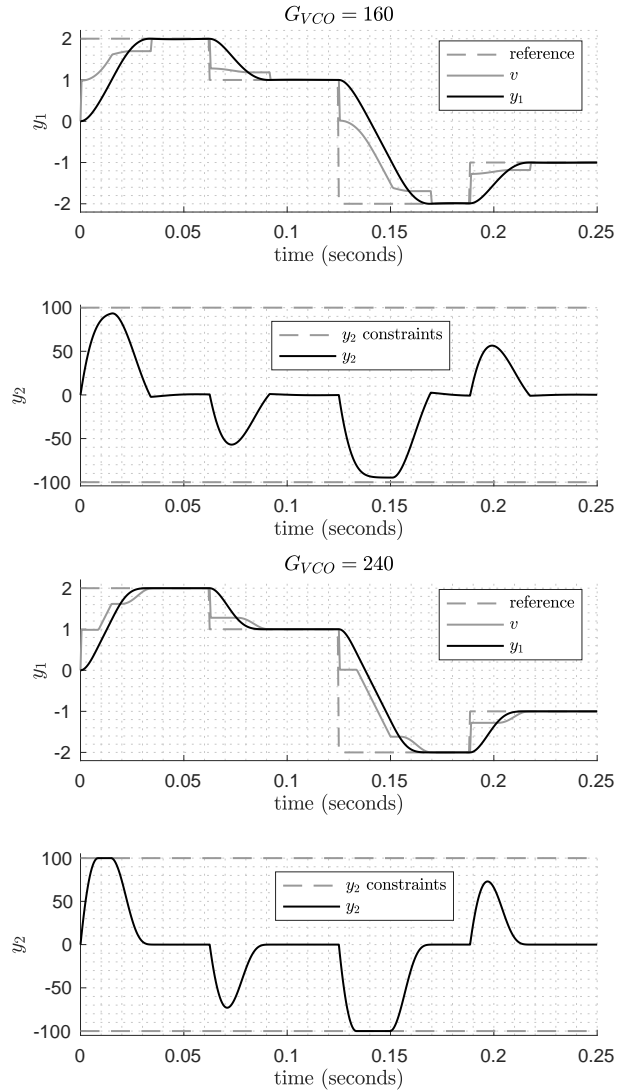}
    \caption{Governed uncertain system responses to multiple step inputs (slew-rate limit = $100$). Note that the top two sub-figures and the bottom two sub-figures correspond to two different realizations of the system uncertainty shown in the plot titles.}
    \label{fig:Uncertainty_Plot}
\end{figure}

\subsection{RG-DC as a nonlinear filter}
Finally, we present an interesting experiment, which led to a thought-provoking observation regarding the frequency response of the governed PLL system, which we discuss next.

In \cite{Nonlinear_Bode}, it is shown that nonlinear systems can be analyzed using frequency domain techniques if they satisfy the ``convergence" property. Essentially, a system is defined to be convergent if, akin to a linear system, its response  converges to the forced response, regardless of the initial conditions. As argued in \cite{Nonlinear_Bode}, nonlinear convergent systems can be analyzed using the \textit{nonlinear Bode magnitude plot}, which is a proper extension of the traditional Bode magnitude plot for linear systems. However, unlike the linear Bode plot, which is only a function of the frequency of the input sinusoid, the nonlinear Bode plot is generally a function of both the frequency and amplitude of the input. 

In our case, it can be shown that the overall system with the RG-DC governing the input is indeed a convergent system (see Fig. \ref{fig:Convergence} for graphical argument). Furthermore, as we prove in Theorem \ref{thm:scaling}, the RG-DC is such that the nonlinear Bode plot has no dependence on the amplitude of the input because  the system satisfies the homogeneity condition, similar to a linear system. 
Thus, we adopt the methods in \cite{Nonlinear_Bode} to generate a nonlinear Bode magnitude plot of the governed system as a function of the input frequency only. This plot is presented in Fig. \ref{fig:Bode}, which also shows the Bode magnitude plot of the ungoverned PLL system \eqref{eqn:PLL} as comparison. The other plots labeled ``$2^{\mathrm{nd}}$ order system" and ``$12^{\mathrm{th}}$ order system" will be explained later.
Details on how each plot was generated can be found in the caption of Fig. \ref{fig:Bode}. Upon inspection of Fig. \ref{fig:Bode}, it appears that the resonant peak inherent in the Bode magnitude plot of the underdamped closed loop PLL system is completely eliminated with the implementation of the RG-DC.

\begin{figure}
    \centering
    \includegraphics[width = \columnwidth]{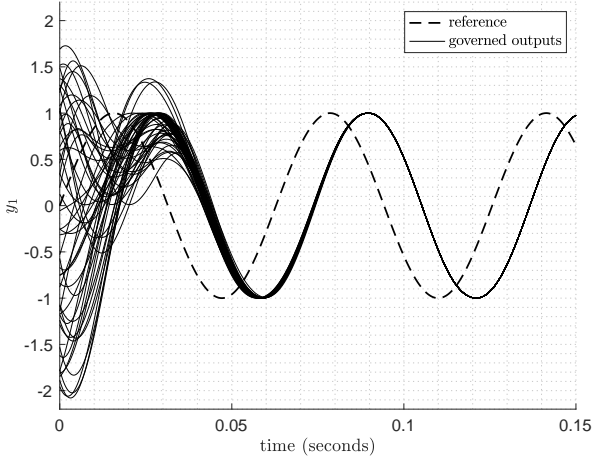}
    \caption{Demonstration of convergence via simulation of the governed PLL system (no slew-rate limit) at $50$ jointly uniformly distributed random initial conditions $\left(x_0,v_0\right)$. Initial condition ranges: $x_{0_1} \in [-2,2],\ x_{0_2} \in [-200,200],\ v_0 \in [-1,1]$. The reference $r(t)$ is a sinusoid with frequency $100$ rad/s. Note that overshoot mitigation constraints for some initial conditions were not satisfied because the initial conditions did not belonged to MAS.}
    \label{fig:Convergence}
\end{figure}

\begin{figure}
    \centering
    \includegraphics[width = \columnwidth]{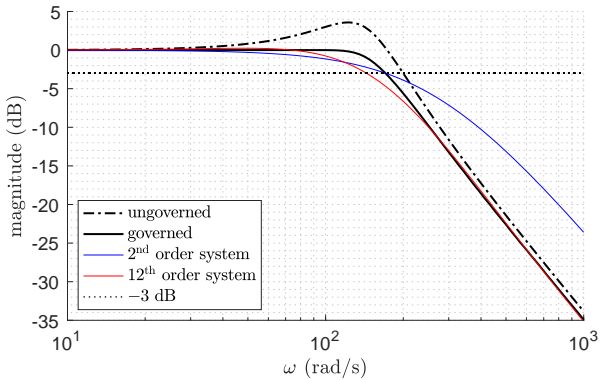}
    \caption{Bode magnitude plot (ungoverned PLL system) and nonlinear Bode magnitude plot (governed PLL system, no slew-rate limit). In addition, the Bode magnitude plots of the $2^{\mathrm{nd}}$ order and $12^{\mathrm{th}}$ order systems are shown. The nonlinear Bode magnitude plot was generated by simulating governed system responses with sinusoidal references of amplitude $1$ at $100$ different frequencies that were logarithmically equally spaced ranging from $10$ rad/s to $1,000$ rad/s. The supremum norm of the outputs were measured at steady-state and converted to dB. The ungoverned PLL system, $2^{\mathrm{nd}}$ order system, and $12^{\mathrm{th}}$ order system Bode magnitude plots were generated using the standard linear systems approach applied to the respective linear system models.}
    \label{fig:Bode}
\end{figure}

 These results suggest that RG-DC could potentially be used in conjunction with a resonant low-pass filter, giving the ability to eliminate the resonant behavior without greatly affecting the cutoff frequency or the attenuation properties beyond the cutoff frequency. In other words, the RG-DC may be thought of as a ``nonlinear" filter with anti-resonance properties.
 
We highlight the fact that the resonant peak in the Bode plot of the ungoverned system shown in Fig. \ref{fig:Bode} is caused by underdamped poles in \eqref{eqn:PLL}. Therefore, a natural solution to remove the resonant peak is by using plant inversion, i.e., replacing the RG-DC in Fig. \ref{fig: Governor scheme} with an inverse model that cancels the underdamped poles of the closed-loop system with a pair of complex conjugate  zeros. Since the resulting inverse model would be improper, additional (overdamped) poles must be added to obtain a proper transfer function. The series connection of the inverse model and the closed-loop system yields an equivalent transfer function with no resonance peaks. This raises the following question: how does the nonlinear Bode plot from the governed PLL system compare with the linear Bode  plot of this equivalent system? To have a fair comparison, we introduce two choices for the equivalent systems, both with DC gain equal to $1$ and relative degree equal to $2$ to maintain the $-40$ dB/decade roll-off of the governed PLL system. The first system, of order $2$, was designed such that the $-3$ dB bandwidth was equal to that of the governed PLL system (based off the nonlinear Bode magnitude plot), whereas the second system, of order $12$, was designed to minimize the root-mean-square error relative to the nonlinear Bode magnitude plot. The results are shown in Fig. \ref{fig:Bode}. Note that the frequency response of the governed PLL system has a faster transition from $0$  to $-40$ dB/decade compared to the frequency response of the $2^{\mathrm{nd}}$ order system. 
Furthermore, the $12^{\mathrm{th}}$ order Bode magnitude plot matches the nonlinear Bode magnitude plot except for the frequency range from $100$ rad/s to $200$ rad/s  where the magnitude of the $12^{\mathrm{th}}$ order frequency response is less than the magnitude of the governed frequency response. We thus make the mild conclusion that the frequency response provided by the governed PLL system is not attainable by a low-order linear system (of degree less than $12$), which shows that an RG-DC governed resonant low-pass filter does indeed produce a novel frequency response.
 

We now prove the homogeneity property of the overall system with the RG-DC governing the reference, as alluded to above. In preparation for Theorem \ref{thm:scaling}, we introduce the following notation.
Let the governed output of system \eqref{eqn:RGDC_system} be $y_{tr}\left(t,r(t),(x_0,v_0)\right)$,
where $r(t)$ is the reference signal that is applied to the system depicted in Fig. \ref{fig:RGblock} with initial conditions $(x_0,v_0):=(x(0),v(-1))$ belonging to MAS. The following theorem holds. 

\begin{theorem}
\label{thm:scaling}
Suppose $p=0$ in system  \eqref{eqn:RGDC_system}, so that $y_{tr}(t)$ is the only output governed by RG-DC. Then $y_{tr}\left(t,\alpha r(t),\alpha(x_0,v_0)\right) = \alpha y_{tr}\left(t,r(t),(x_0,v_0)\right)$, $\forall \alpha \in \mathbb{R}^+$.
\end{theorem}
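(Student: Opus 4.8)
The plan is an induction on the timestep $t$ showing that the \emph{entire} closed-loop trajectory scales by $\alpha$. Let $x(t)$, $v(t)$, $\kappa(t)$ be the state, governed command, and RG-DC gain produced by the reference $r(\cdot)$ with initial condition $(x_0,v_0)$, and let $\bar x(t)$, $\bar v(t)$, $\bar\kappa(t)$ be the corresponding quantities produced by $\alpha r(\cdot)$ with initial condition $\alpha(x_0,v_0)$. The induction hypothesis at time $t$ is $\bar x(t)=\alpha x(t)$ and $\bar v(t-1)=\alpha v(t-1)$; the base case $t=0$ holds by construction since $(\bar x(0),\bar v(-1))=\alpha(x_0,v_0)$. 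Assuming the hypothesis and (the key claim, proven below) $\bar\kappa(t)=\kappa(t)$, the update law \eqref{eqn:v[t]} gives $\bar v(t)=\alpha v(t-1)+\kappa(t)\big(\alpha r(t)-\alpha v(t-1)\big)=\alpha v(t)$, and the state recursion in \eqref{eqn:RGDC_system} then gives $\bar x(t+1)=A\big(\alpha x(t)\big)+B\big(\alpha v(t)\big)=\alpha x(t+1)$, closing the induction. Finally, since $D_{tr}=0$, $y_{tr}(t)=C_{tr}x(t)$, so $\bar y_{tr}(t)=\alpha C_{tr}x(t)=\alpha y_{tr}(t)$, which is the claim.

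The crux is therefore to show $\bar\kappa(t)=\kappa(t)$, i.e.\ that the linear program \eqref{eq:kappa_for_RG} is invariant under simultaneous positive scaling of its data $x(t)$, $v(t-1)$, $r(t)$ by $\alpha$. Two facts suffice. First, the selection of the dynamic MAS from Table~\ref{tab:4_combinations} is unchanged: $\operatorname{sign}(\alpha r(t))=\operatorname{sign}(r(t))$ because $\alpha>0$, and the discriminating condition obeys $\bar y_{tr}(t)\le \alpha r(t) \Longleftrightarrow \alpha C_{tr}x(t)\le \alpha r(t) \Longleftrightarrow y_{tr}(t)\le r(t)$, again because $\alpha>0$; hence the same case (say Case~1, $O_{\infty,tr}=O_\infty^-(r^+)$) governs both runs. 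Second, with the case fixed, Proposition~\ref{thm3}(i) gives $O_\infty^-(\alpha r(t))=\alpha\,O_\infty^-(r(t))$, so for any candidate gain $\kappa$, using $v(t)=(1-\kappa)v(t-1)+\kappa r(t)$, the membership $(\bar x(t),\bar v(t))\in O_\infty^-(\alpha r(t))$ reads $\big(\alpha x(t),\alpha v(t)\big)\in \alpha\,O_\infty^-(r(t))$, which is equivalent to $(x(t),v(t))\in O_\infty^-(r(t))$. Thus the feasible set $\{\kappa\in[0,1]:(x(t),v(t))\in O_{\infty,tr}\}$ is identical for the two runs — in particular one LP is infeasible precisely when the other is, in which case both set $\kappa=0$ — so their maximizers coincide, giving $\bar\kappa(t)=\kappa(t)$.

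Two minor points complete the argument. The degenerate case $r(t)=0$ is covered: then $\alpha r(t)=0$, the MAS used is the fixed row-padded representation, and the homogeneous constraint $H_x x+H_v v\le 0$ is invariant under positive scaling of $(x,v)$, so the same reasoning goes through. Also, the hypothesis $p=0$ means there is no static MAS $O_{\infty,st}$, hence $\kappa^\ast(t)=\kappa_{tr}(t)$ and no extra bookkeeping is needed; if one wished to extend the result to $p>0$, one would additionally need the static constraint set $\mathbb{Y}_{st}$ to be a cone, which is not assumed here. I expect the main obstacle to be the second fact above — that the LP defining $\kappa$ is unaffected by the scaling — but this is exactly the positive homogeneity of the dynamic MAS already established in Proposition~\ref{thm3}, so once the correct case of Table~\ref{tab:4_combinations} is shown to be preserved, the remainder is routine induction.
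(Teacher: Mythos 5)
Your proposal is correct and follows essentially the same route as the paper's proof: an induction on the timestep whose key step is that scaling $x(t)$, $v(t-1)$, $r(t)$ by $\alpha>0$ preserves the case selection in Table~\ref{tab:4_combinations} and, via Proposition~\ref{thm3}(i), scales the dynamic MAS by $\alpha$, so the LP \eqref{eq:kappa_for_RG} returns the same $\kappa$ (including the infeasible case $\kappa=0$), after which linearity of \eqref{eqn:RGDC_system} from $v$ to $y_{tr}$ finishes the argument. Your explicit treatment of $r(t)=0$ is a small addition beyond what the paper writes out, but the substance of the two proofs is the same.
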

\begin{proof}
We prove the homogeneity condition of the RG-DC from $r$ to $v$ by principal of induction. The homogeneity condition from $r$ to $y_{tr}$ then follows from the fact that the closed-loop system \eqref{eqn:RGDC_system} from $v$ to $y_{tr}$ is linear. 

We first establish the base case of the inductive argument, where we prove that scaling the initial conditions, $(x_0,v_0)=(x(0),v(-1))$, and the reference, $r(0)$, by $\alpha$ (written in short by $(x_0,v_0)\rightarrow \alpha (x_0,v_0)$, $r(0) \rightarrow \alpha r(0)$), scales the next iterate by $\alpha$:  $(x(1),v(0))\rightarrow \alpha (x(1),v(0))$. To show this, consider the RG algorithm from \eqref{eq:kappa_for_RG} at time $t=0$ with the change of variables: $x(0)\rightarrow \alpha x(0)$, $v(-1)\rightarrow \alpha v(-1)$, and $r(0)\rightarrow \alpha r(0)$. Finally, let $\widetilde{O}_\infty$ from \eqref{eq:kappa_for_RG} be $O_{\infty,tr}\left(\alpha r(0),\alpha y_{tr}(0)\right)=O_{\infty,tr}\left(\alpha r(0),\alpha C_{tr}x_0\right)$. The optimization problem becomes:
\begin{equation}\label{eq:base_case}
\begin{aligned}
&\underset{\kappa\in [0,1]}{\text{maximize}}
& & \mathrm{\kappa} \\
& \hspace{10pt} \text{s.t.}
& & v(0)=\alpha v_0+\kappa\left(\alpha r(0)-\alpha v_0\right)\\
&&&\left(\alpha x_0,\ v(0)\right) \in O_{\infty,tr}\left(\alpha r(0),\alpha C_{tr}x_0\right)
\end{aligned}
\end{equation}
Recall from Table \ref{tab:4_combinations}, that the relationship between $y_{tr}(t)$ and $r(t)$ determines which of the four cases of dynamic MAS is used in the RG algorithm at timestep $t$. Furthermore, note that scaling $r(t)$ and $y_{tr}(t)$ by $\alpha$ does not change which case is appropriate. This implies that $O_{\infty,tr}\left(r(t), C_{tr}x(t)\right)$ and $O_{\infty,tr}\left(\alpha r(t), \alpha C_{tr}x(t)\right)$, at any instance in time, both belong to the same case from Table \ref{tab:4_combinations}. By Proposition \ref{thm3} $i)$, it then follows that $O_{\infty,tr}\left(\alpha r(0), \alpha C_{tr}x_0\right)=\alpha O_{\infty,tr}\left(r(0), C_{tr}x_0\right)$, and we can conclude that the constraints of optimization problem \eqref{eq:base_case} are unaffected by $\alpha$. Furthermore, noting that the cost function of \eqref{eq:base_case} also does not depend on $\alpha$, it follows that optimization problem \eqref{eq:base_case} results in the same optimizer $\kappa^*$ at time $t=0$ regardless of $\alpha$. From here, it follows that the modified reference from \eqref{eq:base_case} is $\alpha v(0)$. Furthermore, at time $t=1$ we can conclude that $\alpha x(1)=A\alpha x_0+B\alpha v(0)$. Note that this base case also holds when there is no solution to the optimization problem because $\kappa^*=0$ at time $t=0$ means that $\alpha v(0)= \alpha v_0$.

We now present the induction step, where we prove that scaling the parameters, $x(t)$, $v(t-1)$, and $r(t)$, by $\alpha$, gives the following result:  $v(t)\rightarrow \alpha v(t)$ and $x(t+1)\rightarrow \alpha x(t+1)$. Consider the RG algorithm from \eqref{eq:kappa_for_RG} with the change of variables: $x(t)\rightarrow \alpha x(t)$, $v(t-1)\rightarrow \alpha v(t-1)$, and $r(t)\rightarrow \alpha r(t)$. Finally, let $\widetilde{O}_\infty$ from \eqref{eq:kappa_for_RG} be $O_{\infty,tr}\left(\alpha r(t),\alpha y_{tr}(t)\right)=O_{\infty,tr}\left(\alpha r(t),\alpha C_{tr}x(t)\right)$. The optimization problem becomes: 
\begin{equation}\label{eq:Induction_Step}
\begin{aligned}
&\underset{\kappa\in [0,1]}{\text{maximize}}
& & \mathrm{\kappa} \\
& \hspace{10pt} \text{s.t.}
& & v(t)=\alpha v(t-1)+\kappa\left(\alpha r(t)-\alpha v(t-1)\right)\\
&&&\left(\alpha x(t),\ v(t)\right) \in O_{\infty,tr}\left(\alpha r(t),\alpha C_{tr}x(t)\right)
\end{aligned}
\end{equation}
Again, note that because $O_{\infty,tr}\left(\alpha r(t),\alpha C_{tr}x(t)\right)=\alpha O_{\infty,tr}\left(r(t), C_{tr}x(t)\right)$, the constraints of optimization problem \eqref{eq:Induction_Step} are independent of $\alpha$. Furthermore, noting that the cost function of \eqref{eq:Induction_Step} also does not depend on $\alpha$, it follows that \eqref{eq:Induction_Step} results in the same optimizer $\kappa^*$ at time $t$ regardless of $\alpha$. From here, it follows that the modified reference from \eqref{eq:Induction_Step} is $\alpha v(t)$. 
Furthermore, at time $t+1$ we can conclude that $\alpha x(t+1)=A\alpha x(t)+B\alpha v(t-1)$. Note that this induction step also holds when there is no solution to the optimization problem because $\kappa^*=0$ at time $t$ means that $\alpha v(t)= \alpha v(t-1)$. Considering the above logic, by the principal of induction, the RG-DC satisfies the homogeneity condition (from $r$ to $v$) and we can conclude that, because the closed-loop system is linear (from $v$ to $y_{tr}$), the entire governed system (from $r$ to $y_{tr}$) satisfies the homogeneity condition. This completes the proof. 
\end{proof}

An additional argument can be made that further strengthens the validity of the nonlinear Bode magnitude plot. We argue that although the input to the linear closed-loop system ($v(t)$) is not perfectly sinusoidal (due to the governing action of the RG-DC), the output of the system ($y_{tr}$) is ``approximately" sinusoidal (which we have observed in our simulations, see for example Fig. \ref{fig:Convergence}). The reason for this phenomenon can be attributed to the fact that the closed-loop system is of low-pass-filtering nature, which implies that it filters out higher order harmonics of $v(t)$. This argument is similar to the methods used for Describing Functions \cite{Khalil_96}. 


\section{Conclusion}

In this paper, an overshoot mitigation control scheme was developed using the reference governor framework. The solution, known as the Reference Governor with Dynamic Constraint (RG-DC), utilizes a dynamic maximal admissible set (MAS) to constrain the tracking output such that overshoot of step inputs is eliminated. The RG-DC loop was proven to be BIBO stable. Additionally, properties of the dynamic MAS were studied and theorems were proven that allow for the RG-DC to operate without recalculation of the matrices that define the dynamic MAS. 
While the RG-DC can guarantee overshoot elimination for all step inputs with the proper initial conditions, it may not remove overshoot for a more general time-varying reference $r$. Conditions were provided in the paper under which elimination will be guaranteed for time-varying $r$.

Future work on the RG-DC and its effect on frequency response are of interest. More specifically, we would like to study the settling time of the system under RG-DC and explore the application of the RG-DC as a nonlinear filter.


\section*{Disclaimer}
\balance
\small
{Portions of this publication and research effort are made possible through the help and support of NIST via cooperative agreement 70NANB19H133. Official contribution of the National Institute of Standards and Technology; not subject to copyright in the United States. Certain commercial products are identified in order to adequately specify the procedure; this does not imply endorsement or recommendation by NIST, nor does it imply that such products are necessarily the best available for the purpose.}
\bibliographystyle{unsrt}
\bibliography{main}
\end{document}